\documentclass[11pt]{article}
\usepackage{stmaryrd}
\usepackage{amsmath}
\usepackage{amssymb}
\usepackage{hyperref}
\usepackage{fullpage}
\usepackage[classfont=bold,
langfont=roman,
funcfont=italic]{complexity}
\usepackage{amsthm}
\usepackage{hyperref}
\hypersetup{colorlinks=true,linkcolor=black,citecolor=blue}
\usepackage{algpseudocode}
\usepackage{framed}
\usepackage{color}

\newenvironment{shadequote}%
{\begin{snugshade}\begin{quote}}
{\hfill\end{quote}\end{snugshade}}

\definecolor{shadecolor}{rgb}{0.9,0.9,0.9}

\newclass{\CLS}{CLS}
\newclass{\FPSPACE}{FPSPACE}
\newclass{\FSPACE}{FSPACE}
\newtheorem{theorem}{Theorem}
\newtheorem{lemma}{Lemma}

\newtheorem{definition}{Definition}
\newtheorem{problem}{Problem}

\newtheorem{claim}{Claim}
\renewcommand{\A}{\textsf{A{\scriptsize RRIVAL}}}

\renewcommand{\SAC}{\textsf{S-A{\scriptsize RRIVAL}}}

\renewcommand{\R}{\textsf{R{\scriptsize UN}}}
\newcommand{\FS}{\textsf{L{\scriptsize OCALOPT}}}
\newcommand{\FSP}{\textsf{S{\scriptsize INK}-O{\scriptsize F}-P{\scriptsize ATH}}}

\setlength{\parindent}{2em}
\setlength{\parskip}{0.7em}

\title{\textbf{Did the Train Reach its Destination: \\
The Complexity of Finding a Witness}}
\author{Karthik C.\ S.\thanks{This work was partially supported by ISF-UGC 1399/14 grant.}\\
Department of Applied Mathematics and Computer Science\\
Weizmann Institute of Science\\
\texttt{karthik.srikanta@weizmann.ac.il}}

\date{}

\begin{document}
\maketitle
\begin{abstract}
Recently, Dohrau et al.\ studied a zero-player game on switch graphs and proved that deciding the termination of the game is in $\NP\cap\coNP$. 
In this short paper, we show that the search version of this game on switch graphs, i.e., the task of finding a witness of termination (or of non-termination) is in \PLS. 
\end{abstract}
\section{Introduction}
Over the years, switch graphs have been a natural model for studying many combinatorial problems (see \cite{KRW12} and references therein). Dohrau et al.\ \cite{DGKMW16} study a problem on switch graphs, which as they suggest fits well in the theory of cellular automata. 
Informally, they describe their problem in the following way.
\begin{shadequote}
Suppose that a train is running along a railway network, starting from a designated origin, with the goal of reaching a designated destination. The network, however, is of a special nature: every time the train traverses a switch, the switch will change its position immediately afterwards. Hence, the next time the train traverses the same switch, the other direction will be taken, so that directions alternate with each traversal of the switch.

Given a network with origin and destination, what is the complexity of deciding whether the train, starting at the origin, will eventually reach the destination?
\end{shadequote}

They showed that deciding the above problem lies in $\NP\cap\coNP$.

In this paper, we address the complexity of the search version of the above problem. From a result of Megiddo and Papadimitriou \cite{MP91}, we have that     $F(\NP\cap\coNP)\subseteq\TFNP$, i.e., the search version of any decision problem in $\NP\cap\coNP$ is in \TFNP. We show that the search version of the problem considered by Dohrau et al.\ is in \PLS, a complexity class inside \TFNP\ that captures the difficulty of finding a locally optimal solution in optimization problems. 

\section{Preliminaries}
We use the following notation $[n]=\{1,\ldots ,n\}$ and $\llbracket n\rrbracket=\{0,\ldots ,n\}$. We recapitulate here the definition of the complexity class \PLS , introduced by Johnson et al. \cite{JPY88}. There are many equivalent ways to define the class \PLS\ and below we define it through its complete problem \FS\ similar to \cite{DP11}.

\begin{definition}[\FS]
Given circuits $S:\{0,1\}^n\to\{0,1\}^n$, and $V:\{0,1\}^n\to [2^n]$, find a string $x\in\{0,1\}^n$ satisfying $V(x)\ge V(S(x))$.
\end{definition}

\begin{definition}
\PLS\ is the class of all search problems which are polynomial time reducible to \FS.
\end{definition}

Below, we recollect some definitions introduced in \cite{DGKMW16}.

\subsection{\A\ Problem}

We start with the object of study: switch graphs. We use the exact same notations as in \cite{DGKMW16}.

\begin{definition}[Switch Graph]
A switch graph is a 4-tuple $G=(V,E,s_0,s_1)$, where $s_0,s_1:V\to V$, $E=\{(v,s_0(v))\mid v\in V\}\cup\{(v,s_1(v))\mid v\in V\}$, with self-loops allowed. For every vertex $v\in V$, we refer to $s_0(v)$ as the even successor of $v$, and we refer to $s_1(v)$ as the odd successor of $v$. For every $v\in V$, $E^+(v)$ denotes the set of outgoing edges from $v$, and $E^-(v)$ denotes the set of incoming edges to $v$.
\end{definition}

Next, consider a formal definition of the procedure \R, which captures the run of the train described in the introduction.

\begin{definition}[\R\ Procedure given in \cite{DGKMW16}]
Given a switch graph $G=(V,E,s_0,s_1)$ with origin and destination $o,d\in V$, the procedure \R\ is described below.
For the procedure, we assume arrays \verb+s_curr+ and \verb+s_next+, indexed by $V$, such that initially \verb+s_curr+$[v] =
s_0(v)$ and \verb+s_next+$[v] = s_1(v)$ for all $v \in V$.

\begin{algorithmic}
\Procedure{\R}{G,o,d}
\State $v:=o$
\While{$v\neq d$}
\State $w:=$\verb+s_curr+$[v]$
\State swap (\verb+s_curr+$[v]$, \verb+s_next+$[v]$)
\State $v := w$ \hfill $\rhd$ traverse edge $(v,w)$
\EndWhile
\EndProcedure
\end{algorithmic}
\end{definition}

The problem \A, considered in \cite{DGKMW16} is the following.

\begin{problem}[\A]
Given a switch graph $G=(V,E,s_0,s_1)$, an origin $o\in V$, and a destination $d\in V$, the problem \A\ is to decide if the procedure \R\ terminates or not.
\end{problem}

\begin{theorem}[\cite{DGKMW16}]\label{Base}
\A\ is in $\NP \cap\coNP$.
\end{theorem}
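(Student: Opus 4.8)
The plan is to extract, from the sequence of edges that \R\ traverses, a polynomially-sized and polynomial-time checkable certificate --- for termination in one direction, for non-termination in the other. Assume throughout that $o\neq d$, the case $o=d$ being trivial. The configuration of \R\ at any moment is the pair (current vertex, contents of \verb+s_curr+), so there are at most $n\cdot 2^{n}$ configurations with $n=|V|$; since \R\ is deterministic, if it ever reaches $d$ it does so within $n2^{n}$ steps, whence its \emph{run profile} $x\colon E\to\llbracket n2^{n}\rrbracket$ (the number of traversals of each edge, treating $E$ as a multiset so that $(v,s_0(v))$ and $(v,s_1(v))$ are distinct even when $s_0(v)=s_1(v)$) is described by $\mathrm{poly}(n)$ bits. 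Call $f\colon E\to\llbracket n2^{n}\rrbracket$ a \emph{switching flow from $a$ to $b$} if it satisfies flow conservation at every vertex other than $a,b$, has out-excess $1$ at $a$ and in-excess $1$ at $b$ (when $a=b$, read this as conservation everywhere), and satisfies the switching constraint $0\le f_{(v,s_0(v))}-f_{(v,s_1(v))}\le 1$ at every $v$ (equivalently: if $v$ is left $k$ times, then $s_0(v)$ is used $\lceil k/2\rceil$ times and $s_1(v)$ is used $\lfloor k/2\rfloor$ times). These are all linear (in)equalities, so testing whether a given $f$ is a switching flow is in polynomial time.

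\textbf{\NP\ membership.} The key lemma is that \R\ terminates if and only if there is a switching flow from $o$ to $d$. One direction is bookkeeping: the run profile of a terminating run is such a flow. For the converse, given a switching flow $f$, simulate \R\ while maintaining the residual vector (namely $f$ minus the profile of the walk performed so far); the switching constraint guarantees the capacity needed at the $j$-th departure from a vertex $v$ is present as long as $j$ does not exceed $v$'s out-flow in $f$. Look at the first moment, if any, at which the run is about to leave a vertex $v^\star\neq d$ along a zero-residual edge: by conservation the residual vector is then a nonnegative flow from $v^\star$ to $d$, so $v^\star$ has positive residual out-flow, while matching the zero-residual condition against the switching constraint at $v^\star$ forces the number of prior departures from $v^\star$ to be at least its out-flow in $f$ --- a contradiction. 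Hence the run never gets stuck before $d$; since it can make at most $\sum_e f_e$ moves, it halts, necessarily at $d$. \NP\ membership follows: guess $f\in\llbracket n2^{n}\rrbracket^{E}$ and accept iff it is a switching flow from $o$ to $d$.

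\textbf{\coNP\ membership.} Here I would certify that the run falls into a cycle it never leaves. If \R\ fails to halt, its configuration trajectory is eventually periodic and avoids all $d$-configurations; let $\kappa^\star=(r,\tau)$ be the first recurrent configuration it reaches. The edges traversed up to $\kappa^\star$ form a switching flow $p$ from $o$ to $r$ (and $\tau_v$ is the parity of the out-flow of $v$ in $p$), and the edges of one subsequent period form a nonzero $z\colon E\to\llbracket n2^{n}\rrbracket$ that conserves flow at \emph{every} vertex and is \emph{balanced}, i.e.\ $z_{(v,s_0(v))}=z_{(v,s_1(v))}$ for all $v$ (a period returns every switch to its state), with $r$ incident to an edge of $z$; moreover no edge incident to $d$ carries positive $p$ or $z$, and $(p,z)$ satisfy a parity-consistency condition ensuring that \R, once it is at $\kappa^\star$, reproduces $z$ and returns to $\kappa^\star$. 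Conversely, a simulation argument in the style of the lemma shows that from any triple $(p,z,r)$ meeting all these conditions the run reaches $r$, thereafter loops along $z$, and never meets $d$. Since $(p,z,r)$ is polynomially bounded and every condition is polynomial-time checkable, guessing it places \A\ in \coNP; with the previous paragraph we conclude that \A\ lies in $\NP\cap\coNP$.

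\textbf{Where the difficulty lies.} The easy part is checking that genuine runs yield the advertised flows. The real work is the \emph{sufficiency} direction in each case --- reconstructing the dynamic behaviour of \R\ from a static object obeying only local linear constraints --- and, on the \coNP\ side, formulating the parity/reachability conditions that splice the transient prefix $p$ to the recurrent cycle $z$ precisely right, so that the certificate is simultaneously always available and sound; in particular one must control the switch configuration with which the run enters $r$, which the flow $p$ alone need not pin down because of spurious balanced sub-circulations. That is where I expect the difficulty to concentrate.
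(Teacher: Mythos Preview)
Your \NP\ argument is exactly the one in \cite{DGKMW16}: the run profile is a switching flow, and conversely any switching flow bounds the number of steps \R\ can take before reaching $d$ (their Lemma~1). Your residual-flow presentation of the converse is correct.

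For \coNP\ you diverge from the paper, and the gap you flag is real. Dohrau et al.\ do not certify the recurrent cycle at all; they observe that if \R\ never reaches $d$ it must enter the set $X_d$ of vertices with no directed path to $d$, so they add a fresh sink $\overline{d}$, redirect every vertex of $X_d$ to it, and obtain a graph $H$ in which \R\ always terminates --- at $d$ or at $\overline{d}$. Non-termination in $G$ becomes termination at $\overline{d}$ in $H$, and the \coNP\ certificate is simply a switching flow to $\overline{d}$: the same object, the same Lemma~1 verification, as on the \NP\ side (this symmetry is precisely what the present paper reuses for the \PLS\ reduction). Your triple $(p,z,r)$ instead needs the soundness claim that any such certificate forces \R\ to avoid $d$, and this does not follow from the simulation lemma as stated: a switching flow $p$ from $o$ to $r$ that avoids $d$ can exist even when the actual run reaches $d$ (passing through $r$ en route), so everything hinges on the interaction of $p$ with $z$ and with the parity-consistency conditions you leave unspecified. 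The graph transformation sidesteps this entirely by reducing both membership questions to a single termination lemma.
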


In order to prove the above result, the authors consider the run profile as a witness. Elaborating, the run profile is a function which assigns 
to each edge the number of times it has been traversed during the procedure \R. It is easy to note that a run profile has
to be a switching flow.

\begin{definition}[Switching Flow, as defined in \cite{DGKMW16}]\label{switch}
Let $G=(V,E,s_0,s_1)$ be a switch graph, and let $o, d \in V$, $o \neq d$. A switching flow is a function \emph{\textbf{x}} $: E \to \mathbb{N}_0$ (where \emph{\textbf{x}}$(e)$ is denoted as $x_e$) such that the following two conditions hold for all $v\in V$.
\begin{align}
&\sum_{e\in E^+(v)}x_e -\sum_{e\in E^-(v)}x_e=\begin{cases}1,&\ v=o,\\
-1,&\ v=d,\\
0,&\ \text{otherwise.}
\end{cases}\\
&0\le x_{(v,s_1(v))}\le x_{(v,s_0(v))}\le x_{(v,s_1(v))}+1. 
\end{align}
\end{definition}

\sloppypar{Note that while every run profile is a switching flow, the converse is not always true  as the balancing condition (2) fails to capture the strict alternation between even and odd successors. Nonetheless, the existence of a switching flow implies the termination of the \R\ procedure (Lemma~1~of~\cite{DGKMW16}). 
%



\section{\SAC\ Problem}\label{search}

Now, we describe a reduction from an instance of \A\ to two instances of \A\ (this is an implicit step in the proof of Theorem~\ref{Base}). 
Given an instance $(G,o,d)$ of \A, we build two new instances of \A, $\left(H,\overline{o},d\right)$ and $(H,\overline{o},\overline{d})$, where $H=(V\cup\{\overline{o},\overline{d}\},E^\prime,s_0^\prime,s_1^\prime)$ is a switch graph specified below. 
Let $X_d$ be the following subset of the vertex set of $G$:
$$X_d=\left\{v\mid \text{There is no directed path in $G$ from }v\text{ to }d\right\}.$$
The vertex set of $H$ is the vertex set of $G$ with the addition of two new vertices $\overline{o}$ and $\overline{d}$. 
We define $s_0(\overline{o})=s_1(\overline{o})=o$.
For $i\in\{0,1\}$ and $v\in V\cup\{\overline{d}\}$, we have that $s_i^\prime$ of $H$ is obtained from $s_i$ of $G$  as follows.
\begin{align*}
s_i^\prime(v)=\begin{cases}
v,&\ v\in\{d,\overline{d}\},\\
\overline{d},&\ v\in X_d,\\
s_i(v),&\ \text{otherwise.}
\end{cases}
\end{align*}

This reduction has the following property. 

\begin{claim}\label{ReductiontoStructure}
If $(G,o,d)$ is an YES instance of \A\ then, $(H,\overline{o},d)$ is an YES instance of \A\ and $(H,\overline{o},\overline{d})$ is a NO instance of \A. On the other hand, 
if $(G,o,d)$ is a NO instance of \A\ then, $(H,\overline{o},d)$ is a NO instance of \A\ and $(H,\overline{o},\overline{d})$ is an YES instance of \A.
\end{claim}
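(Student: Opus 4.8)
The plan is to analyze the modified switch graph $H$ by tracing how the $\R$ procedure behaves on it, splitting into the two cases according to whether $(G,o,d)$ is a YES or NO instance. The key structural observation is that starting $\R$ at $\overline{o}$ immediately moves the train to $o$ (since $s_0'(\overline{o})=s_1'(\overline{o})=o$, the first step is deterministic regardless of switch state), and from then on the run of $\R$ on $H$ started at $\overline{o}$ mirrors the run of $\R$ on $G$ started at $o$ exactly, until (if ever) it reaches a vertex in $X_d\cup\{d\}$. The vertex $d$ is now an absorbing sink in $H$ (we set $s_i'(d)=d$), as is $\overline{d}$, and every vertex of $X_d$ sends both its successors to $\overline{d}$. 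So the whole behavior hinges on where the $G$-run "escapes."

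For the YES case: if $(G,o,d)$ is a YES instance, then $\R$ on $G$ started at $o$ terminates, i.e., reaches $d$. First I would argue that along this terminating run the train never visits a vertex of $X_d$: if it did, then from that vertex there would be no directed path to $d$ in $G$, yet the remainder of the run constitutes exactly such a path (the sequence of traversed edges from that point to $d$), a contradiction. Hence the run on $H$ from $\overline{o}$ reproduces the $G$-run step for step and reaches $d$, which is a sink, so $(H,\overline{o},d)$ is a YES instance. Since $d$ is a sink and $\overline{d}\neq d$ is a different sink unreachable once we are at $d$, the run from $\overline{o}$ in the instance $(H,\overline{o},\overline{d})$ follows the same trajectory, reaches $d$, and loops forever there without hitting $\overline{d}$; thus $(H,\overline{o},\overline{d})$ is a NO instance.

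For the NO case: if $(G,o,d)$ is a NO instance, then $\R$ on $G$ from $o$ runs forever without reaching $d$. I would use the standard finiteness argument (as in the analysis behind Theorem~\ref{Base}): the configuration space of $\R$ — namely (current vertex, switch-state vector) — is finite, so a non-terminating run on $G$ eventually cycles. The point is that running $\R$ on $H$ from $\overline{o}$ either (a) reaches a vertex $v\in X_d$, in which case the next step(s) send the train to $\overline{d}$, which is a sink, so $(H,\overline{o},\overline{d})$ terminates; or (b) never reaches $X_d$, in which case the $H$-run coincides with the $G$-run and runs forever without reaching $d$ — but then I claim it must reach $X_d$ after all, giving a contradiction. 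The claim follows because a run that is confined forever to $V\setminus X_d$ and never reaches $d$ must, by finiteness, enter a cycle of configurations lying entirely in $V\setminus X_d$; one then shows (this is the heart of the Dohrau et al.\ argument) that such a cycle is impossible — e.g.\ by a parity/counting argument on the closed walk, or by observing that a vertex lying on such a recurrent cycle would in fact have a path to $d$ and hence not be in... wait, one must be careful here. So more precisely: the correct dichotomy is that on $H$ the run from $\overline{o}$ reaches $d$ iff the run on $G$ from $o$ reaches $d$, and otherwise it reaches $\overline{d}$ — and the reason it cannot "run forever without absorbing" on $H$ is exactly that in $H$ every non-absorbing vertex either behaves like in $G$ or funnels into $\overline{d}$, so a non-absorbing infinite run projects to a non-terminating $G$-run confined to $V\setminus(X_d\cup\{d\})$, and I need to rule that out.

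The main obstacle is precisely this last point: ruling out an infinite run on $H$ that never reaches either sink. I expect to handle it by invoking the same termination-of-non-reaching argument used by Dohrau et al., or more directly: since every vertex not in $X_d\cup\{d\}$ has, by definition of $X_d$, a directed path to $d$ in $G$, and since the $\R$ procedure on a finite switch graph from any fixed start either reaches a designated sink or enters a recurrent configuration-cycle, I would show that being in a recurrent cycle forces a contradiction with the "escape to $\overline{d}$ or to $d$" structure — concretely, show that the run on $H$ from $\overline{o}$ always terminates at one of the two sinks $d,\overline{d}$ (this is where $H$'s construction, making every "dead" vertex lead to $\overline{d}$, is used), and then the content of the claim is merely that it terminates at $d$ in the YES case and at $\overline{d}$ in the NO case, which follows from the step-by-step simulation together with the $X_d$-avoidance lemma proved above. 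The bookkeeping of switch states in the simulation (arguing that \verb+s_curr+ and \verb+s_next+ evolve identically in $G$ and $H$ for vertices in $V\setminus(X_d\cup\{d\})$) is routine and I would state it as a one-line invariant rather than prove it in detail.
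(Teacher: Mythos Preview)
The paper does not give its own proof of this claim; it simply records that it follows from the proof of Theorem~3 in \cite{DGKMW16}. Your sketch is a faithful reconstruction of that argument and is correct. In particular, the step-by-step simulation together with the observation that a terminating $G$-run never enters $X_d$ handles the YES case cleanly, and the point you single out as the main obstacle in the NO case---that the $H$-run cannot wander forever in $V\setminus(X_d\cup\{d\})$---is exactly what the Dohrau et al.\ termination bound (their Lemma~2) gives once you note that every vertex of $H$ has a directed path to $\{d,\overline{d}\}$ (merge the two sinks and apply the lemma). With that filled in, your argument is complete and matches the cited proof.
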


The proof of the above claim follows from the proof of Theorem~3 in \cite{DGKMW16}. 
We are now ready to describe a search version of the \A\ problem. 

\begin{problem}[\SAC]
Given a switch graph $G=(V,E,s_0,s_1)$, an origin $o\in V$, and a destination $d\in V$, the problem \SAC\ is to either find a switching flow of $(H,\overline{o},d)$ or a switching flow of $(H,\overline{o},\overline{d})$.
\end{problem}

We have that from Lemma~1~of~\cite{DGKMW16}, a switching flow of $(H,\overline{o},d)$ is an \NP-witness for the existence of a run profile of $(G,o,d)$, and that a switching flow of $(H,\overline{o},\overline{d})$ is a \coNP-witness for the non-existence of a run profile of $(G,o,d)$. Thus, \SAC\ is the appropriate search version problem of the \A\ problem.
From Claim~\ref{ReductiontoStructure}, \SAC\ is clearly in \TFNP. In the next section, we show that \SAC\ is in \PLS, a subclass of \TFNP. 

Below, we essentially show that switching flows are bounded, and this is a critical result in order to establish the reduction in Section~\ref{PLS}. Note that this is a strengthening of Lemma~2 in \cite{DGKMW16} which provided a bound on the run profile. 

\begin{lemma}\label{MainBound}
Let $G=(V,E,s_0,s_1)$ be a switch graph, $o\in V$ a origin, $d\in V$ a destination. Let $\mathbf{x}$ be a switching flow of $(H,\overline{o},u)$ for some vertex $u(\neq\overline{o})$ of $H$. Then, we have that for all $v\in (V\cup\{\overline{o}\})\setminus\{d\}$ and $i\in\{0,1\}$, the following bound holds:
$$ x_{(v,s_i(v))}< 2^n,$$
where $n$ is the number of vertices of $H$.
\end{lemma}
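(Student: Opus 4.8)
The plan is to bound the switching flow by reasoning locally around each vertex and then propagating the bound along paths, exploiting the structure of $H$ — in particular the fact that every vertex of $X_d$ has all its out-edges redirected to $\overline{d}$, so that in $H$ every non-sink vertex has a directed path to the destination $u$. First I would set up the conservation/balancing consequences of Definition~\ref{switch}: for any vertex $v \notin \{d,\overline{d},\overline{o}\}$, condition (2) forces $x_{(v,s_0(v))}$ and $x_{(v,s_1(v))}$ to differ by at most $1$, so the total outflow from $v$ lies in the interval $[2x_{(v,s_1(v))}, 2x_{(v,s_1(v))}+1]$; combined with the flow-conservation condition (1), the total inflow into $v$ equals the total outflow (up to the $\pm 1$ at $o=\overline{o}$ and the sink $u$). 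The upshot is that the inflow at $v$ is at least $2x_{(v,s_1(v))} \ge x_{(v,s_i(v))} + x_{(v,s_{1-i}(v))} - 1$, i.e. the larger of the two outgoing values is at most roughly half the inflow. I would phrase this as: if $M_v := \max_i x_{(v,s_i(v))}$ and $N_v$ denotes the inflow to $v$, then $2M_v \le N_v + 1$ for internal $v$, and more crudely $M_v \le N_v$ (slightly adjusted near $o$ and $u$).

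Next I would push this inequality backwards along a path. Fix a vertex $v$ and an index $i$; I want to bound $x_{(v,s_i(v))}$. The inflow $N_v = \sum_{e \in E^-(v)} x_e$ is a sum of at most $2n$ edge-values (since $H$ has $n$ vertices and each contributes at most two out-edges), each of which is one of the $M_w$ quantities. So $x_{(v,s_i(v))} \le M_v \le N_v \le \sum_{w:\, s_j(w)=v} x_{(w,s_j(w))}$. The natural way to close this is to consider a BFS/shortest-path layering of $H$ \emph{toward} $o=\overline{o}$: since $\overline{o}$ is the unique source and has in-degree considerations making it special, and since the only vertex with net surplus is $\overline{o}$, any edge carrying positive flow must be "reachable from $\overline{o}$ by a positive-flow path". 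I would make this precise via a standard flow-decomposition-style argument: the switching flow, being a nonnegative integer flow from $\overline{o}$ to $u$, decomposes, but more usefully, I would induct on distance from $\overline{o}$. Define $L_k$ = set of vertices at distance $k$ from $\overline{o}$ in $H$; then $M_v$ for $v \in L_k$ is bounded by the sum of $M_w$ over its in-neighbours, which lie in earlier or nearby layers, giving a recursion $M_v \le \sum (\text{earlier } M\text{'s})$ with base case $M_{\overline{o}} \le 1$ (its single out-value is constrained since $s_0'(\overline{o})=s_1'(\overline{o})=o$ and the flow out of $\overline{o}$ is exactly $1$ by condition (1) — actually both out-edges go to $o$, so their sum is the outflow, which is $1$, hence each is at most $1$).

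The doubling to $2^n$ then comes out as follows: if we order the vertices $\overline{o}=v_0, v_1, \ldots, v_{n-1}$ by nondecreasing distance from $\overline{o}$, I claim $M_{v_k} \le 2^k$, proved by induction — $M_{v_k} \le N_{v_k} \le \sum_{j < k} (\text{contributions}) \le \sum_{j=0}^{k-1} 2^j = 2^k - 1 < 2^k$ (with care that each earlier vertex contributes at most its $M$-value total, since a vertex $w$ has at most two out-edges but they can't both point to $v_k$ and both be near the maximum without the balancing condition forcing things; even allowing both, $x_{(w,s_0(w))} + x_{(w,s_1(w))} \le 2M_w + 1$, and a slightly more generous induction hypothesis $M_{v_k} \le 2^k$ with the $-1$ slack absorbs this). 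Hence every edge value is $< 2^{n-1} < 2^n$, and in particular $x_{(v,s_i(v))} < 2^n$ for all $v \ne d$ and $i \in \{0,1\}$.

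The main obstacle I anticipate is handling the arithmetic slack carefully: the $\pm 1$ terms from condition (2) and from the source/sink in condition (1) must be tracked so that the geometric sum genuinely closes at $2^n$ rather than, say, $2^n \cdot \mathrm{poly}(n)$. The clean way around this is to prove the stronger statement $M_{v_k} \le 2^k$ (not $2^k - 1$) by induction so that the accumulated slack is always comfortably swallowed; another subtlety is justifying that vertices unreachable from $\overline{o}$ carry zero flow on their out-edges (immediate from nonnegativity and conservation, since no surplus can reach them), so that the layering only needs to cover the reachable part and the bound $n$ on the number of layers is safe. The redirection of $X_d$-vertices to $\overline{d}$ and the self-loops at $d,\overline{d}$ play no role in the backward argument; they only matter for Claim~\ref{ReductiontoStructure}, so I would not need them here beyond noting $H$ has $n$ vertices.
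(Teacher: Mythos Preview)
Your induction on distance \emph{from} $\overline{o}$ has a genuine gap: the step ``$N_{v_k}\le \sum_{j<k}(\text{contributions})$'' is not justified, because in-edges to $v_k$ need not come from vertices closer to $\overline{o}$. The graph $H$ is not acyclic, and nothing in the construction prevents back edges. For a concrete picture, take $G$ with vertices $o,a,d$, $s_0(o)=a$, $s_1(o)=d$, $s_0(a)=s_1(a)=o$; then in $H$ the vertex $o$ (at distance $1$ from $\overline{o}$) receives both out-edges of $a$ (at distance $2$), so the inflow $N_o$ includes terms you have not yet bounded. Your recursion simply does not close. The remark that unreachable vertices carry zero flow is correct but does not address this circularity.

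The paper avoids this by inducting in the \emph{opposite} direction, on the ``desperation'' of an edge, i.e.\ the distance from its head to $d$. The key point you are missing is how the balancing condition (2) is actually used: from any vertex $w$ at distance $k$ from $d$, at least one out-edge has desperation $k-1$; by the inductive hypothesis that edge carries value $\le 2^{k}-1$, and then condition (2) forces the \emph{other} out-edge (whose desperation may be large and is otherwise uncontrolled) to be within $1$ of it. Hence the total outflow of $w$ is at most $2(2^{k}-1)+1=2^{k+1}-1$, which bounds every incoming edge of $w$. This is why the doubling works and why the direction of the layering matters. To anchor the base case one needs the sink to have zero outflow; since the given switching flow has an arbitrary sink $u$, the paper first adds a run profile $\mathbf{y}$ of a modified switch graph starting at $u$ so that $\mathbf{z}=\mathbf{x}+\mathbf{y}$ is a switching flow with sink $d$ (whose self-loops may be zeroed), proves the bound for $\mathbf{z}$, and transfers it back to $\mathbf{x}$ via $\mathbf{y}\ge 0$. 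None of these ingredients appear in your proposal.
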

\begin{proof}
We recapitulate here that $H=(V\cup\{\overline{o},\overline{d}\},E^\prime,s_0^\prime,s_1^\prime)$.
First we observe that without loss of generality we can assume that $x_{(d,s_0'(d))}=x_{(d,s_1'(d))}=x_{(\overline{d},s_0'(\overline{d}))}=x_{(\overline{d},s_1'(\overline{d}))}=0$. This is because we can consider a new switching flow $\mathbf{x}'$ of $(H,\overline{o},u)$, defined such that for all $e\in E'\setminus (E^+(d)\cup E^+(\overline{d}))$, we have $x_e'=x_e$, and for all $e\in E^+(d)\cup E^+(\overline{d})$, we have $x_e'=0$. Note that the edges in $E^+(d)\cup E^+(\overline{d})$ are self loops and thus $\mathbf{x}'$ satisfies the conditions of Definition~\ref{switch}. Moreover, if the above lemma is true for $\mathbf{x}'$ then it is true for $\mathbf{x}$ as well because for all $v\in (V\cup\{\overline{o}\})\setminus\{d\}$, we have $x_{(v,s_i(v))}=x_{(v,s_i(v))}'$. 

Next, we build a switch graph $I=(V\cup\{\overline{o},\overline{d}\},E',s_0'',s_1'')$ from $H=(V\cup\{\overline{o},\overline{d}\},E',s_0',s_1')$, as follows. For all $i\in\{0,1\}$, and for all $v\in V\cup\{\overline{o},\overline{d}\}$, we define $s_i''(v)$ as follows: 
$$
s_i''(v)=\begin{cases}
s_{1-i}'(v)&\text{ if }{x}_{(v,s_0'(v))}-{x}_{(v,s_1'(v))}=1,\\
s_i'(v)&\text{ if }{x}_{(v,s_0'(v))}-{x}_{(v,s_1'(v))}=0
\end{cases}.	
$$

Let $\mathbf{y}$ be the run profile of the procedure \R\ on the switch graph $I$ starting from $u$ and ending at a vertex in $\{d,\overline{d}\}$ (if $u\in\{d,\overline{d}\}$ then, we define $\mathbf{y}=\vec{0}$). Without loss of generality we assume that the above procedure ends on vertex $d$. It is important to note that if $u\notin\{d,\overline{d}\}$ there is exactly one incoming edge $e$ of $d$ such that $y_e=1$, and for every other incoming edge $e'$ of $d$, we have $y_{e'}=0$. Let $\mathbf{z}=\mathbf{x}+\mathbf{y}$, where the addition is done point-wise. We claim that $\mathbf{z}$ is a switching flow of $(H,\overline{o},d)$. This is clearly true when $u\in\{d,\overline{d}\}$, as we have $\mathbf{z}=\mathbf{x}$. If $u\notin\{d,\overline{d}\}$, $\mathbf{z}$ is a switching flow because of the following:

\begin{itemize}
\item For every vertex $v\in (V\cup\{\overline{d}\})\setminus \{u,d\}$, we note the following: 
\begin{align*}
\sum_{e\in E^+(v)}z_e&=\sum_{e\in E^+(v)}(x_e +y_e)=
 \sum_{e\in E^+(v)}x_e+\sum_{e\in E^+(v)}y_e\\
&=\sum_{e\in E^-(v)}x_e+\sum_{e\in E^-(v)}y_e
=\sum_{e\in E^-(v)}z_e
\end{align*}

For the vertex $u$, we note the following,
\begin{align*}
\sum_{e\in E^+(u)}z_e&=\sum_{e\in E^+(u)}(x_e +y_e)=
 \sum_{e\in E^+(u)}x_e+\sum_{e\in E^+(u)}y_e\\
&=\left(-1+\sum_{e\in E^-(u)}x_e\right)+\left(1+\sum_{e\in E^-(u)}y_e\right)
=\sum_{e\in E^-(u)}z_e
\end{align*}

For the vertex $\overline{o}$, we note the following,
\begin{align*}
\sum_{e\in E^+(\overline{o})}z_e&=\sum_{e\in E^+(\overline{o})}(x_e +y_e)=
 \sum_{e\in E^+(\overline{o})}x_e+\sum_{e\in E^+(\overline{o})}y_e\\
&=\left(1+\sum_{e\in E^-(\overline{o})}x_e\right)+\left(\sum_{e\in E^-(\overline{o})}y_e\right)
=\left(\sum_{e\in E^-(d)}z_e\right)+1
\end{align*}

For the vertex $d$, we note the following,
\begin{align*}
\sum_{e\in E^+(d)}z_e&=\sum_{e\in E^+(d)}(x_e +y_e)=
 \sum_{e\in E^+(d)}x_e+\sum_{e\in E^+(d)}y_e\\
&=\left(\sum_{e\in E^-(d)}x_e\right)+\left(-1+\sum_{e\in E^-(d)}y_e\right)
=\left(\sum_{e\in E^-(\overline{o})}z_e\right)-1
\end{align*}

\item For every $v\in V\cup \{\overline{o},\overline{d}\}$, we note that if ${x}_{(v,s_0'(v))}-{x}_{(v,s_1'(v))}=1$ then we have that ${y}_{(v,s_0'(v))}-{y}_{(v,s_1'(v))}\in \{-1,0\}$, and thus consequently, ${z}_{(v,s_0'(v))}-{z}_{(v,s_1'(v))}\in\{0,1\}$. On the other hand if ${x}_{(v,s_0'(v))}-{x}_{(v,s_1'(v))}=0$ then we have that ${y}_{(v,s_0'(v))}-{y}_{(v,s_1'(v))}\in \{0,1\}$, and thus, ${z}_{(v,s_0'(v))}-{z}_{(v,s_1'(v))}\in\{0,1\}$. 
\end{itemize}
Therefore, for all $v\in V\cup \{\overline{o},\overline{d}\}$, we have that the conditions in Definition~\ref{switch} are satisfied and thus $\mathbf{z}$ is a switching flow of $(H,\overline{o},d)$. Next, we show that for all $v\in (V\cup\{\overline{o}\})\setminus\{d\}$ and $i\in\{0,1\}$, the following bound holds:
$$ z_{(v,s_i(v))}< 2^n,$$
and the lemma follows as $\mathbf{z}=\mathbf{x}+\mathbf{y}$ and $y_e\ge 0$ for all edges $e\in E'$. In order to prove the above bound on $\mathbf{z}$, we recall the definition of `desperation' of an edge from \cite{DGKMW16}. For an edge $e=(v,w)$, the length of the shortest directed path from its head $w$ to $d$ is called its desperation. Also, recollect that $X_d$ is the set of all vertices in $G$ such that there is no  directed path in $G$ from $v$ to $d$. For every $i\in\{0,1\}$ and $v\in X_d$, we have $z_{(v,s_0'(v))}=z_{(v,s_1'(v))}=0$ because $z_{(\overline{d},s_0'(\overline{d}))}=z_{(\overline{d},s_1'(\overline{d}))}=0$. This implies that  $\underset{e\in E^+(v)}{\sum}z_e=0$, and thus we have for all vertices $v'$ such that $s_i'(v')=v$ for some $i\in\{0,1\}$, we have $z_{(v',s_i'(v'))}=0$. Additionally, we note that $\underset{e\in E^-(\overline{o})}{\sum}z_e=0$ and thus we have for all $i\in\{0,1\}$ that $z_{(\overline{o},s_i'(\overline{o}))}\le 1$.

Next, we note that for every vertex $v$ in $V\setminus (X_d\cup\{d\})$ and $i\in\{0,1\}$ such that $s_i'(v)\in V\setminus X_d$, we have that the desperation of $(v,s_i'(v))$ is well-defined and less than $n$. We show by induction on the desperation value that $z_{(v,s_i'(v))}\le 2^{k+1}-1$, where $k$ is the desperation of $(v,s_i'(v))$. If $k=0$, then $s_i'(v)=d$. But since $\underset{e\in E^+(d)}{\sum}z_e=0$, we have $\underset{e\in E^-(d)}{\sum}z_e=1$, and thus $z_{(v,s_i'(v))}\le 1$. We suppose now that $k>0$, and we have from induction hypothesis that $\underset{e\in E^+(s_i'(v))}{\sum}z_e\le 2(2^k -1)+1=2^{k+1}-1$. Since $\underset{e\in E^-(s_i'(v))}{\sum}z_e=\underset{e\in E^+(s_i'(v))}{\sum}z_e$, we have $z_{(v,s_i'(v))}\le \underset{e\in E^-(s_i'(v))}{\sum}z_e\le 2^{k+1}-1$. This completes the claim that $z_{(v,s_i'(v))}\le 2^{k+1}-1$. Finally, by noting that $k<n$, we have $z_{(v,s_i'(v))}\le 2^{k+1}-1 \le 2^n-1<2^n$.
\end{proof}

\section{\PLS\ Membership}\label{PLS}
In this section, we show that $\SAC$ is in \PLS.
\begin{theorem}\label{PLSmem}
\SAC\ is in \PLS.
\end{theorem}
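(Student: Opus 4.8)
The plan is to reduce \SAC\ in polynomial time to the \PLS-complete problem \FS. Fix an input $(G,o,d)$ and construct $H=(V\cup\{\overline{o},\overline{d}\},E',s_0',s_1')$ as in Section~\ref{search}; let $n$ be the number of vertices of $H$. I would take the configuration space to be the set of bit strings of a suitable polynomial length $m$, interpreted as functions $\mathbf{x}\colon E'\to\llbracket 2^n-1\rrbracket$ (allocate $n$ bits per edge, force the self-loop coordinates at $d,\overline{d}$ to $0$, and pad the remaining bits). Call $\mathbf{x}$ \emph{admissible} if it is a switching flow of $(H,\overline{o},u)$ for some $u\neq\overline{o}$; this is decidable in polynomial time (compute the excess $\sum_{e\in E^+(v)}x_e-\sum_{e\in E^-(v)}x_e$ at every vertex, check that it is $+1$ at $\overline{o}$, $-1$ at a single vertex $u$ and $0$ elsewhere, and check condition~(2) of Definition~\ref{switch} at every vertex), and the ``deficient vertex'' $u$ is then uniquely determined. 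The key point is that, by Lemma~\ref{MainBound} together with the normalization of the self-loops at $d,\overline{d}$ used in its proof, \emph{every} switching flow of \emph{every} instance $(H,\overline{o},u)$ has all coordinates $<2^n$ and hence lies in this box; so the configuration space is genuinely large enough to contain every object we care about.

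Next I would define the successor circuit $S$ as a single ``virtual step of the train''. If $\mathbf{x}$ is not admissible, set $S(\mathbf{x})=\mathbf{x}_0$, the canonical admissible configuration with $x_{(\overline{o},o)}=1$ and every other coordinate $0$. If $\mathbf{x}$ is admissible with deficient vertex $u\in\{d,\overline{d}\}$, set $S(\mathbf{x})=\mathbf{x}$. If $\mathbf{x}$ is admissible with $u\notin\{d,\overline{d}\}$, let $w=s_0'(u)$ when $x_{(u,s_0'(u))}=x_{(u,s_1'(u))}$ and $w=s_1'(u)$ otherwise (by condition~(2) exactly one of the two cases holds), and obtain $S(\mathbf{x})$ from $\mathbf{x}$ by increasing $x_{(u,w)}$ by $1$. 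A routine check shows that $S(\mathbf{x})$ again satisfies conditions~(1) and~(2) and is a switching flow of $(H,\overline{o},w)$ --- and if $w=u$, i.e.\ the step traverses a self-loop, it is still a switching flow of $(H,\overline{o},u)$ --- so by Lemma~\ref{MainBound} it is admissible and in particular $S(\mathbf{x})\in\{0,1\}^m$. The value circuit is $V(\mathbf{x})=1+\sum_{e\in E'}x_e$ on admissible $\mathbf{x}$ and $V(\mathbf{x})=1$ otherwise; since every coordinate is $<2^n$ and $|E'|\le 2n$, the range lies inside $[2^m]$ once $m$ is a sufficiently large polynomial in $n$, and $S,V$ are plainly polynomial-time computable.

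For correctness it then remains to check that every solution $\mathbf{x}$ of the resulting \FS\ instance --- every $\mathbf{x}$ with $V(\mathbf{x})\ge V(S(\mathbf{x}))$ --- is already a solution of \SAC. If $\mathbf{x}$ is not admissible then $V(S(\mathbf{x}))=V(\mathbf{x}_0)=2>1=V(\mathbf{x})$, contradicting $V(\mathbf{x})\ge V(S(\mathbf{x}))$; so $\mathbf{x}$ is admissible, with some deficient vertex $u$. If $u\notin\{d,\overline{d}\}$ then $S(\mathbf{x})$ carries one more unit of total flow, so $V(S(\mathbf{x}))=V(\mathbf{x})+1>V(\mathbf{x})$, again impossible. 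Hence $u\in\{d,\overline{d}\}$, and then $\mathbf{x}$ itself is a switching flow of $(H,\overline{o},d)$ or of $(H,\overline{o},\overline{d})$, i.e.\ a valid answer to \SAC; the solution-extraction map is the identity. (This is a legitimate \FS\ instance: $S$ maps $\{0,1\}^m$ into itself by the boundedness argument, $V$ maps into $[2^m]$, and totality of \FS\ is automatic; indeed $V$ strictly increases along every non-fixed-point $S$-step, so iterating $S$ from $\mathbf{x}_0$ simply simulates the \R\ procedure.)

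The single step I expect to be the crux is the guarantee that the virtual train step never leaves the polynomially representable box: a priori the coordinates of a switching flow could be doubly exponential, and it is exactly Lemma~\ref{MainBound} --- the strengthening of the run-profile bound of \cite{DGKMW16} to arbitrary switching flows --- that rules this out and makes the encoding legal. The rest is bookkeeping: malformed strings must be routed by $S$ to an admissible configuration of strictly larger value so that they cannot be spurious local optima, the self-loops at $d$ and $\overline{d}$ are normalized away, and self-loops at ordinary vertices are harmless since pushing one unit of flow along such an edge is a valid step that leaves the deficient vertex unchanged.
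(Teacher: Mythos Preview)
Your proposal is correct and follows essentially the same approach as the paper: encode (partial) switching flows as configurations, let the successor circuit perform a single \R\ step, take total flow as the potential, and invoke Lemma~\ref{MainBound} to ensure all coordinates fit into polynomially many bits. The only cosmetic differences are that the paper carries the current vertex $v$ explicitly as part of the configuration (you instead recover it as the unique deficient vertex of $\mathbf{x}$), and the paper routes both malformed and terminal configurations to the pair $(\overline{o},\mathbf{0})$ rather than to your $\mathbf{x}_0$ / a fixed point.
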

\begin{proof}
Given an instance $(G,o,d)$ of \SAC, we build an instance $(S,V)$ of \FS\ as follows. Let $n$ be the size of the vertex set of $H=(V^\prime,E^\prime,s_0^\prime,s_1^\prime)$ (see section~\ref{search} for definition).
We construct $S:[n]\times \llbracket 2^n\rrbracket^{2n}\to [n]\times \llbracket 2^n\rrbracket^{2n}$ and $V:[n]\times \llbracket 2^n\rrbracket^{2n}\to \llbracket 2n\cdot 2^n\rrbracket\cup\{-1\}$. Informally, an element of the domain of $V$ or $S$ is the concatenation of the label of a vertex $v$ in $H$ and a potential switching flow for $(H,\overline{o},v)$. If the switching flow for $(H,\overline{o},v)$ satisfies the two conditions of Definition~\ref{switch}, $S$ computes the switching flow of $(H,\overline{o},N(v))$, where $N(v)$ is the next vertex encountered by the procedure \R\ (assuming the switching flow for $(H,\overline{o},v)$). Similarly, if the switching flow for $(H,\overline{o},v)$ satisfies the two conditions of Definition~\ref{switch}, $V$ is the total number of times the edges in $H$ have been traversed by \R\ according to the switching flow of $(H,\overline{o},v)$. More formally, for $(v,\mathbf{x})\in [n]\times \llbracket 2^n\rrbracket^{2n}$, we construct $S((v,\mathbf{x}))$ as follows:

\begin{itemize}
\item If $\mathbf{x}$ does not satisfy the switching flow conditions for $(H,\overline{o},v)$ then, $S((v,\mathbf{x}))=(\overline{o},0^{2n})$.
\item If $\mathbf{x}$ is a switching flow of $(H,\overline{o},v)$ and $v\notin \{d,\overline{d}\}$, let $e=(v,s_i^\prime(v))$, where $i=x_{(v,s_0^\prime(v))}-x_{(v,s_1^\prime(v))}$. Then, we define $S((v,\mathbf{x}))=
(s_i^\prime(v),\mathbf{x}+\xi_{e})$, where  $\xi_e$ is a vector in $\llbracket 2^n\rrbracket^{2n}$ which is 1 on the $e^{\text{th}}$ coordinate and 0 everywhere else. Note that from Lemma~\ref{MainBound}, we have that $x_e<2^n$ and thus we have $\mathbf{x}+\xi_{e}$ on the $e^{\text{th}}$ coordinate is at most $2^n$, i.e., in the range of the output of the circuit $S$.
\item If $\mathbf{x}$ is a switching flow of $(H,\overline{o},v)$ and $v\in \{d,\overline{d}\}$ then, $S((v,\mathbf{x}))=(\overline{o},0^{2n})$.
\end{itemize}

We note here that the construction of $S$ only depends on checking if $\mathbf{x}$ is a switching flow (can be performed in $\text{poly}(n)$ time)  and finding the appropriate neighbor in $H$ (can be computed in $O(n)$ time).
Next, we construct $V((v,\mathbf{x}))$ as follows:
\begin{itemize}
\item If $\mathbf{x}$ is not a switching flow of $(H,\overline{o},v)$ then, $V((v,\mathbf{x}))=-1$.
\item If $\mathbf{x}$ is a switching flow of $(H,\overline{o},v)$ then, $V((v,\mathbf{x}))=\|\mathbf{x}\|_1=\underset{e\in E^\prime}{\sum}x_e$.
\end{itemize}

\sloppypar{
Let $(v,\mathbf{x})\in [n]\times \llbracket 2^n\rrbracket^{2n}$ be a solution to the \FS\ instance ($V,S$), i.e., $V((v,\mathbf{x}))\ge~V(S((v,\mathbf{x})))$. Suppose $\mathbf{x}$ is not a switching flow of $(H,\overline{o},v)$ then, $V((v,\mathbf{x}))=-1$ and $V(S((v,\mathbf{x})))=V((\overline{o},0^{2n}))=0$, thus $(v,\mathbf{x})$ cannot be a solution to \FS\ in that case. Suppose $\mathbf{x}$  is a switching flow of $(H,\overline{o},v)$ and $v\notin \{d,\overline{d}\}$ then, we note that $\mathbf{x}+\xi_{(v,s_i^\prime(v))}$ is a switching flow of $(H,\overline{o}, s_i'(v))$, where $i=x_{(v,s_0^\prime(v))}-x_{(v,s_1^\prime(v))}$. Thus, we have $V((v,\mathbf{x}))=\|\mathbf{x}\|_1$ and $V(S((v,\mathbf{x})))=V((s_i^\prime(v),\mathbf{x}+\xi_{(v,s_i^\prime(v))}))=\|\mathbf{x}+\xi_{(v,s_i^\prime(v))}\|_1=\|\mathbf{x}\|_1+1$, thus $(v,\mathbf{x})$ cannot be a solution to \FS\ in that case as well. This means that $(v,\mathbf{x})$ can be a solution of \FS\ only if $\mathbf{x}$ is a switching flow of $(H,\overline{o},v)$ and $v\in \{d,\overline{d}\}$.}  

Suppose there is a run profile $\mathbf{x}$ of $(H,\overline{o},d)$. From Lemma~2 in \cite{DGKMW16}, we have that $x_e<2^n$ for all $e\in E'$, and thus $S((d,\mathbf{x}))$ and $V((d,\mathbf{x}))$ are well defined. Since $\overline{o}\neq d$, we have that $\|{\mathbf{x}}\|_1>0$. Therefore,  we have that $V((d,{\mathbf{x}}))>0$. On the other hand, we have $S((d,{\mathbf{x}}))=(\overline{o},0^{2n})$. This implies that $0=V(S((d,{\mathbf{x}})))< V((d,{\mathbf{x}}))$. Therefore $(d,{\mathbf{x}})$ is a solution of the instance $(S,V)$ of \FS.  

Suppose there is a run profile $\mathbf{x}$ of $(H,\overline{o},\overline{d})$. Again from Lemma~2 in \cite{DGKMW16}, we have that $x_e<2^n$ for all $e\in E'$, and thus $S((\overline{d},\mathbf{x}))$ and $V((\overline{d},\mathbf{x}))$ are well defined. Since $\overline{o}\neq \overline{d}$, we have that $\|\mathbf{x}\|_1>0$. Therefore,  we have that $V((\overline{d},\mathbf{x}))>0$. On the other hand, we have $S((\overline{d},\mathbf{x}))=(\overline{o},0^{2n})$. This implies that $0=V(S((\overline{d},\mathbf{x})))< V((\overline{d},\mathbf{x}))$. Therefore $(\overline{d},\mathbf{x})$ is a solution of the instance $(S,V)$ of \FS. 
\end{proof}

\section{Discussion and Conclusion}
In this short paper, we have introduced a search version of the problem \A, called \SAC, and showed that \SAC\ is contained in \PLS. The main  open problem is to determine the hardness of \SAC\ (or equivalently \A): is \SAC\ in \FP\ or is it \PLS-hard? 

Additionally, one could try to address the complexity of finding the run profile of $(G,o,d)$, i.e., determining the number of times each edge has been traversed during the procedure \R. We suspect that this might be \FPSPACE-complete\ because the following related problem can be shown to be \FPSPACE-complete.  

\begin{definition}[\FSP]
Given circuits $S:\{0,1\}^n\to\{0,1\}^n$ and $V:\{0,1\}^n\to \llbracket 2^n\rrbracket$, and $s^\star\in\{0,1\}^n$, find a string $x\in\{0,1\}^n$ such that there is some integer $r\in \llbracket 2^n\rrbracket$ satisfying $S^r(s^\star)=x$ and $V(x)\ge V(S(x))$.
\end{definition}

\begin{theorem}[Similar to Theorem 2 of \cite{P94}]\sloppypar{
\FSP\ is \FPSPACE-complete.}
\end{theorem}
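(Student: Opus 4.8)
The plan is to establish containment in \FPSPACE\ and \FPSPACE-hardness separately, mirroring the structure of Papadimitriou's argument for the analogous iterated-circuit problem \cite{P94}. For containment, I would observe that a witness can be found by a polynomial-space machine as follows: starting from $s^\star$, iterate the circuit $S$ for up to $2^n$ steps, maintaining only the current string $x_r = S^r(s^\star)$, a counter $r$ written in binary (hence $O(n)$ bits), and the value $V(x_r)$. Since the state space $\{0,1\}^n$ has size $2^n$, within $2^n$ iterations the trajectory $s^\star, S(s^\star), S^2(s^\star),\ldots$ must either hit a string $x$ with $V(x)\ge V(S(x))$ or enter a cycle; if it enters a cycle then some string on the cycle is a local maximum of $V$ along the cycle and hence satisfies $V(x)\ge V(S(x))$, so a solution is guaranteed to appear within $2^n$ steps and can be detected on the fly. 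Evaluating $S$ and $V$ on a given input is polynomial time (hence polynomial space), and the counter and comparison bookkeeping is polynomial space, so the whole procedure runs in \FPSPACE; totality is immediate from the cycle argument, so this is a legitimate total search problem in \FPSPACE.

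For \FPSPACE-hardness, I would reduce an arbitrary function $f$ computable in polynomial space to \FSP. Fix a polynomial-space Turing machine $M$ computing $f$ on inputs of length $m$, using space $p(m)$ for some polynomial $p$; its configuration graph has at most $2^{q(m)}$ configurations for some polynomial $q$, each encodable as a string of length $n := q(m)$. Let $S$ be the circuit that, on input a configuration $c$, outputs the successor configuration $\delta_M(c)$, with halting configurations mapped to themselves (a self-loop); let $s^\star$ be the initial configuration of $M$ on the given input. Define $V$ so that it is strictly increasing along every non-halting transition — e.g.\ $V(c)$ is a time-stamp or a monotone potential that increases by one on each step — while $V$ on a halting configuration is set so that the self-loop forces $V(c)\ge V(S(c))$; concretely, one can take $V(c)$ to encode the step count padded so that halting configurations sit at a plateau. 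Then the unique string $x$ reachable from $s^\star$ with $V(x)\ge V(S(x))$ is exactly the halting configuration of $M$, from which $f(x)$ can be read off by a polynomial-time post-processing step, completing the reduction.

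The main obstacle is the design of $V$: it must be a genuine poly-size circuit $V:\{0,1\}^n\to\llbracket 2^n\rrbracket$ that is \emph{strictly} monotone along the entire computation path of $M$ and simultaneously flat (or non-increasing) exactly at the halting configuration, so that the only \FSP-solution on the trajectory is the halting configuration and no spurious local maximum appears earlier. The clean way to arrange this, following \cite{P94}, is to augment each configuration with an explicit step counter $t\in\llbracket 2^n\rrbracket$ as part of the string, have $S$ increment $t$ on each non-halting step and freeze everything once $M$ halts, and set $V(c,t)=t$; then $V$ strictly increases until the machine halts and is constant thereafter, the counter never overflows because $M$'s running time is at most $2^{p(m)}\le 2^n$ by a standard configuration-count bound, and checking whether a configuration is halting is a trivial circuit. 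One should also double-check the boundary encoding conventions ($\llbracket 2^n\rrbracket$ versus $[2^n]$, and the $r\in\llbracket 2^n\rrbracket$ range in the problem statement) so that the guaranteed solution is reached within the allotted number of iterations; this is routine but needs to be stated carefully.
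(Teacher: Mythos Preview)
Your proposal is correct and follows essentially the same approach as the paper: encode configurations of a polynomial-space machine together with a step counter, let $S$ be the one-step transition with counter update, let $V$ return the counter value, and take $s^\star$ to be the initial configuration with counter $0$. The only cosmetic difference is that the paper increments the counter modulo $N=|\Sigma|^{p(n)}$ unconditionally (so the sole non-increase of $V$ along the trajectory occurs at the wraparound, by which time the machine has halted), whereas you freeze the counter at the halting configuration; both devices guarantee that the unique \FSP-solution on the trajectory from $s^\star$ encodes the halting configuration.
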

\begin{proof}
We shall show that \FSP\ is \FPSPACE-hard, as the membership is immediate. Fix any problem $\Pi$ in $\FPSPACE$. By definition of $\FPSPACE$ there exists some polynomial $p:\mathbb{N}\to\mathbb N$ and a Turing machine $T$ on alphabet $\Sigma$ which solves $\Pi$ for every input of size $n$ using at most $p(n)$ space. Starting from an instance $I$ of $\Pi$ (of size $n$) we build an instance of \FSP\ as follows. Let $N=\left|\Sigma\right|^{p(n)}$. We fix $s^\star$ to be the configuration of the Turing machine with only the input concatenated with a counter set to 0. For each configuration $c$ of $T$ and counter value $i$, we set the computation of $S$ to be equal to the configuration of $T$ after running one step starting from $c$, concatenated with the counter incremented to $i+1$ modulo $N$. We set the computation of $V$ to be equal to the counter modulo $N$. 
Note that after $T$ with input $I$ has halted the configuration of $T$ doesn't change and also that $T$ halts in at most $N$ steps (as no configuration can be repeated). The reduction to \FSP\ follows.
\end{proof}

Therefore, if \SAC\ was \PLS-hard, one could plausibly utilize the reduction from \FS\ to \SAC\ to show that determining the run profile is \FPSPACE-complete, as finding the sink of a given path for the \FS\ problem is \FPSPACE-complete.

\subsection*{Acknowledgment}
I would like to thank Bernd G\"{a}rtner, Oded Goldreich, and Eylon Yogev for several helpful discussions.  I would also like to thank the anonymous reviewers for their helpful comments. 

\bibliographystyle{alpha}
\bibliography{geom}

\newcommand{\etalchar}[1]{$^{#1}$}
\begin{thebibliography}{DGK{\etalchar{+}}16}

\bibitem[DGK{\etalchar{+}}16]{DGKMW16}
J{\'{e}}r{\^{o}}me Dohrau, Bernd G{\"{a}}rtner, Manuel Kohler, Jir{\'{\i}}
  Matousek, and Emo Welzl.
\newblock {ARRIVAL}: {A} zero-player graph game in {NP }$\cap${ coNP}.
\newblock {\em CoRR}, abs/1605.03546, 2016.

\bibitem[DP11]{DP11}
Constantinos Daskalakis and Christos~H. Papadimitriou.
\newblock Continuous local search.
\newblock In {\em Proceedings of the Twenty-Second Annual {ACM-SIAM} Symposium
  on Discrete Algorithms, {SODA} 2011, San Francisco, California, USA, January
  23-25, 2011}, pages 790--804, 2011.

\bibitem[JPY88]{JPY88}
David~S. Johnson, Christos~H. Papadimitriou, and Mihalis Yannakakis.
\newblock How easy is local search?
\newblock {\em J. Comput. Syst. Sci.}, 37(1):79--100, 1988.

\bibitem[KRW12]{KRW12}
Bastian Katz, Ignaz Rutter, and Gerhard~J. Woeginger.
\newblock An algorithmic study of switch graphs.
\newblock {\em Acta Inf.}, 49(5):295--312, 2012.

\bibitem[MP91]{MP91}
Nimrod Megiddo and Christos~H. Papadimitriou.
\newblock On total functions, existence theorems and computational complexity.
\newblock {\em Theor. Comput. Sci.}, 81(2):317--324, 1991.

\bibitem[Pap94]{P94}
Christos~H. Papadimitriou.
\newblock On the complexity of the parity argument and other inefficient proofs
  of existence.
\newblock {\em J. Comput. Syst. Sci.}, 48(3):498--532, 1994.

\end{thebibliography}
\end{document}